\newtheorem{proposition}{Proposition}
\theoremstyle{definition}
\newtheorem{definition}{Definition}[section]
\begin{document}
\begin{titlepage}
\begin{center}
{\large \textbf{Adaptive Financial Fraud Detection in Imbalanced Data with Time-Varying Poisson Processes}}\\
\vspace{2.5 cm}

R\'egis Houssou \protect\footnote{R\'egis Houssou, Research Data Scientist, HEIG-VD. E-mail: regis.houssou@heig-vd.ch} 
J\'er\^ome Bovay \protect\footnote{J\'er\^ome Bovay, Data Scientist, NetGuardians. E-mail: bovay@netguardians.ch}
Stephan Robert \protect\footnote{Stephan Robert, Professor, HEIG-VD, E-mail:Stephan.Robert@heig-vd.ch}\\
\vspace{0.5cm}
{\today}

\begin{abstract}
This paper discusses financial fraud detection in imbalanced dataset using homogeneous and non-homogeneous Poisson processes. The probability of predicting fraud on the financial transaction is derived. Applying our methodology to the financial dataset shows a better predicting power than a baseline approach, especially in the case of higher imbalanced data.\\
\vspace{1cm}
\textit{Keywords}: Homogeneous Poisson process, Inhomogeneous Poisson process, intensity model, fraud detection, Imbalanced data.
\end{abstract}

\vspace{6cm}
\end{center}
\end{titlepage}

\section{Introduction}
 Financial fraud is growing exponentially, especially because of the large sums involved. McAfee estimates in 2018 that cybercrime, of which financial fraud is a factor, costs the world about US\$ 600 billion, or 0.8\% of global GDP. According to McKinsey, global losses due to card fraud could reach nearly US\$ 44 billion by 2025. In addition to the direct cost of fraud, companies also suffer from lost sales when real transactions are denied by the companies. McKinsey estimates that false positives account for up to 25\% of transactions denied by online retailers, see \cite{Dyz}. However, as a first step, banks and financial institutions have approached the detection of fraud using manual procedures or rule-based solutions, which have yielded good results, but these methods currently have limitations. The rule-based approach means that a complex set of requirements for suspicious transaction reporting must be defined and reviewed manually. While this may be effective in detecting anomalies consistent with known patterns, it does not detect frauds that follow new or unknown patterns. The increasing complexity of digital attacks and the creativity of cyber-attackers make these conventional detection methods less effective and quickly obsolete. More sophisticated techniques must be developed, including automatic learning algorithms, and evolve the detection of fraud towards methods using adaptive rules to tighten the mesh of the network.\\
 
The machine learning models work with many parameters and are much more efficient at finding subtle correlations in the data, which can be masked by an expert system or by human criticism, \cite{Dyz}. The large volume of transactional data and client data readily available in the financial services industry makes it an ideal tool for the application of complex machine learning algorithms. In addition to learning from known models, machine learning can go further and learn new models without human operation. This allows models to adapt over time to discover previously unknown patterns or to identify new tactics that can be used by fraudsters. In fact, the development of conventional machine learning algorithms has led them to solve some specific problems, one of the most important features of which is that the distribution of data is generally balanced, unlike financial fraud, which is not balanced. Most standard classifiers such as decision trees and neural networks assume that learning samples are evenly distributed among different classes. However, in many real-world applications, the ratio of the minority class is very small( 1:100, 1:1000 or can be exceeded at 1:10000).  Due to the lack of data, few samples of the minority learning class tend to be falsely detected by the classifiers and the decision limit is therefore far from correct. Numerous research works in machine learning has been proposed to solve the problem of data imbalance; \cite{Garcia}, \cite{Galar}, \cite{Kraw}, \cite{Abra}, etc. However, most of these algorithms suffer from certain limitations in real-world applications, such as the loss of usual information, classification cost, excessive time, and adjustments, see \cite{Abra}.\\

In this paper, we address the problem of fraud detection in imbalanced data using the Poisson process; fraud is defined as a rare event occurring at a random time and involving significant financial losses. In this context, the fraud times are defined as the jump times of the Poisson process with intensity that describes the instantaneous rate of fraud.  Unlike machine learning methods, we do not look inside the subtle correlations in the data; instead, we assume that an exogenous rate or intensity must be determined. Instead of asking why the fraud is committed, the fraud rate is calibrated using market data. A lot of research has been done on the application of the Poisson process to financial risks, see \cite{AD}, \cite{JT}, \cite{DS}, etc. For calibration purposes, we assume that intensity is a deterministic function of time that takes into account the homogeneous and inhomogeneous Poisson process. Three main inputs are needed to estimate the intensity: the deterministic form of the intensity function, the arrival times of the frauds and the labels. \\

The main contributions of the paper are:

\begin{enumerate}
\item  Even though the intensity based approach is used in many fields, such as credit risk models, we are among the first to apply this approach to fraud detection
\item The Poisson process is addressed to rare events and it requires few inputs for the estimation of the intensity; so, the risk of over-fitting and computational cost would be reduced.
\item The approach combined with the machine learning algorithms can conduct to a sophisticated technique for detecting frauds.
\end{enumerate}

The rest of the paper is organized as follows.  Section II defines the mathematical concepts of Poisson process; the homogeneous and the Inhomogeneous Poisson process are reviewed. The estimation of the intensity and the prediction of fraud events are discussed. In the section III, the model is applied to financial datasets and the results are presented. The dataset was provided by NetGuardians \footnote{https://netguardians.ch}, a swiss company which develops solutions for banks to proactively prevent fraud.

\section{Mathematical concepts of Poisson Process}
\subsection{Fraud Event}
\label{arb}

Consider a financial institution such as a bank, an insurance company, a trading company, etc. and information about its clients. We are interested in the occurrence of fraud in client transactions for such an institution. The fraud event is then defined as a rare event occurring at a random time and resulting in significant financial losses for the client and the financial institution.
Whatever the definition used for a fraudulent event, let us note the fraud time by $\tau$ which corresponds to $[0, \infty] $ value of random variable on the filtered probability space $(\Omega, \mathcal {F}, \mathbb {F}, \mathbb {P}) $. 
$\Omega$ denotes the possible states of the world, $\mathcal{F}$ is the $\sigma$-algebra, $\mathbb{F}=(\mathcal{F}_{t})_{t\geq 0}$ is the filtration with $\mathcal{F}_{t}$ contains all information up to time $t$ and $\mathcal{F}_{T}=\mathcal{F}$. $\mathbb{P}$ is the probability measure describing the likelihood of certain events. The only mathematical structure assumed for $\tau$ is that it should be a stopping time, that is a random variable $\tau: \Omega \rightarrow \mathbb{R_{+}\cup \{\infty\}}$, such that $\{\tau\leq t\} \in \mathcal{F}_{t}$ for $t\geq 0$. Intuitively, one can determine whether or not the fraud time occurs before a certain deterministic time by observing the past up to time $t$, which is encoded in the filtration ($\mathcal{F}_{t}$).\\

Now consider a sequence $(\tau_{n})_{n \geq 0}$ of fraud times and let $N=\lbrace N(t); t\geq 0 \rbrace$ be a counting process given by 
\begin{eqnarray}
N(t)=\sum_{n\geqslant0} 1_{\{\tau_{n}\leqslant t\}}
\end{eqnarray}
In other words, $N(t)$ counts the number of fraud events between $0$ and $t$. $N$ has the following properties: $1. N(t)\geq 0$; $2. N(t)$ is an integer; 3. For $s\leq t$, $N(s)\leq N(t)$. The last property implies that $N$ is a submartingale since $E\left( N(t) \vert \mathcal{F}_{s}\right) \geq N(s)$. Because of the last property, the Doob-Meyer theorem guarantees the existence of an increasing predictable process A called compensator starting at 0 such that $M = N-A$ is a martingale. The compensator $A$ is uniquely defined up and governs the distribution of $N$. We assume that the compensator $A$ is absolutely continuous w.r.t. Lebesgue measure such that there
is a non-negative, integrable and predictable intensity process $\lambda$ that satisfies 
\begin{eqnarray}
A(t)=\int^{t}_{0}\lambda(s)ds
\end{eqnarray}
The process $\lambda$ represents the conditionally expected number of events per unit of time
in the sense that, at any time $t$, the $\mathcal{F}_{t-}$ conditional probability of an event between $t$ and $t+h$ is approximatively $\lambda(t) h$ for small $h$, where $\mathcal{F}_{t-}$ contains all information just before time $t$. In fact, because $N$ has the predictable intensity process $\lambda$, $dN(t)-\lambda(t) h$ is a martingale increment, and heuristically we thus have 

\begin{eqnarray}
E\left( dN(t)-\lambda(t) h \vert \mathcal{F}_{t-} \right) =0
\end{eqnarray}

Since $\lambda$ is predictable, $\lambda(t) \in \mathcal{F}_{t-}$ so we can move $\lambda(t) h$ outside the expectation and obtain 

\begin{eqnarray}
E\left( dN(t)\vert \mathcal{F}_{t-} \right) =\lambda(t) h
\end{eqnarray}

and 

\begin{eqnarray}
E\left( N(t+h)-N(t) \vert \mathcal{F}_{t-} \right) =\lambda(t) h
\end{eqnarray}

For more details see \cite{Reiss} and \cite{Thoma}\\

In the rest of this paper, we will focus on the counting process with a deterministic intensity that gives rise to homogeneous and inhomogeneous Poisson process. In this context, the likelihood of fraud events will be derived and implemented. 

\subsection{Homogeneous Poisson Process}
\subsubsection{Overview}
\label{arb}
The Homogeneous Poisson Process (HPP) is a fundamental stochastic process which is simple, easy to
understand and possesses desirable mathematical and theoretical properties making it easy to
handle. It can be easily extended to more complicated and realistic situations \cite{KI}. Let $N=(N(t))_{t\geq 0}$ be the counting process defined above i.e. for each $t >0$ which counts the number of fraud events that happen between time $0$ and time $t$. In order to have an overview of the Poisson process, let's consider three definitions of the Poisson process that are equivalent to each other. For the proof see \cite{RO} and \cite{Draz}.
\theoremstyle{definition}
\begin{definition}{}
$N$ is an HPP with constant intensity $\lambda \geq 0$ if:
\begin{enumerate} 
\item $N(0)=0$;
\item The process has stationary and independent increments;
\item For small $h$, $P(N(t+h)-N(t)=1)=\lambda h +o(h)$
\item $P(N(t+h)-N(t)\geq 2)=o(h)$
\end{enumerate}
\end{definition}

\begin{definition}{}
$N$ is an HPP with constant intensity $\lambda \geq 0$ if:
\begin{enumerate} 
\item $N(0)=0$;
\item The process has stationary and independent increments;
\item For $0 \leq s < t, N(t)-N(s)$ is Poisson distributed with parameter $\lambda(t-s)$.
\end{enumerate}
That is,
\begin{eqnarray}
P(N(t)-N(s)=k)=\frac{e^{-\lambda(t-s)}(\lambda(t-s))^{k}}{k!}
\end{eqnarray}
For any interval for size $t$, $\lambda t$ is the expected number of frauds in that interval.
\end{definition}

\begin{definition}{}
$N$ is HPP with constant intensity $\lambda \geq 0$ if the waiting times between successive events, or arrivals follow an exponential distribution of parameter $\lambda$.\\

\noindent This definition made the Poisson process unique among renewal process by the memoryless of the Exponential distribution. 

\end{definition}
\subsubsection{Estimation of the intensity $\lambda$}
\label{lam}
The simple and trivial way for estimating the constant intensity $\lambda$ is to use the above third definition of the HPP related to the Exponential distribution of the waiting times. \\

Let $N=(N(t))_{t\geq 0}$ an homogeneous Poisson process of parameter $\lambda$ and $(\tau_{n})_{n \geq 0}$ a sequence of fraud times. We define $S_{n}=\tau_{n}-\tau_{n-1}$, the waiting times between the event $n-1$ and the event $n$ with $S_{1}=\tau_{1}$. Because $(S_{t})_{t\geq 0}$ follows the exponential distribution of parameter $\lambda$,

\begin{eqnarray}
E(S_{n})=\frac{1}{\lambda}
\end{eqnarray}
Using the moment method, the estimator $\overline{\lambda}$ of $\lambda$ is given by
\begin{eqnarray}
\overline{\lambda}=\frac{1}{\overline{S}}
\label{int0}
\end{eqnarray}
which is also the Maximum Likelihood Estimator (MLE) of $\lambda$

\subsection{Non-Homogeneous Poisson Process}
\subsubsection{Overview}
\label{arb}
Non-Homogeneous Poisson Process (NHPP) means that the intensity $\lambda(t)$ is deterministic function. Thus, the distribution of the number of events between two particular points on the timeline is no longer a function depending on the difference between these points, as in the case of a Homogeneous Poisson Process (HPP). Here it is a function of the starting-point and the en point of the time interval and is not necessarily stationary. Let's start with the definition of the NHPP given in \cite{RO} 
\begin{definition}{}
The counting process $N=(N(t))_{t\geq 0}$ is said to be a NHPP with intensity function $\lambda(t)$, 
$t\geq 0 $, if it satisfies,
\begin{enumerate} 
\item $N(0)=0$;
\item $N$ has independent increments;
\item for small $h$, $P(N(t+h)-N(t)=1)=\lambda(t) h +o(h)$
\item $P(N(t+h)-N(t)\geq 2)=o(h)$
\end{enumerate}
The function $\lambda(t)$ is sometimes called the instantaneous arrival rate of the NHPP.
\end{definition}{}
A consequence of the above definition is that $N(t)-N(s)$ follows Poisson distribution of parameter $\int^{t}_{s}\lambda(u)du$. That is,
 \begin{eqnarray}
P(N(t)-N(s)=k)=\frac{e^{-\int^{t}_{s}\lambda(u)du}(\int^{t}_{s}\lambda(u)du)^{k}}{k!}
\end{eqnarray}
We can explore the relationship between the average number of events occurring up to the time $t$ and the intensity function $\lambda(t)$ of the corresponding NHPP:
\begin{eqnarray}
E(N(t))=\int^{t}_{0}\lambda(s)ds=A(t)
\end{eqnarray}
As described above, the compensator $A(t)$ is a non-decreasing right-continuous function and is referred here as the expectation function of the NHPP.\\
In addition, the expected number of events between times $t$ and $t+s$ is expressed as
\begin{eqnarray}
E(N(t+s)-N(t))=\int^{t+s}_{t}\lambda(u)du=A(t+s)-A(t)
\end{eqnarray}
According to \cite{CoxLewis},  we can examine the distribution function of the time to the next event in NHPP by 
\begin{eqnarray}
P(\mbox{1 or more events occurred in} \,  (t,t+s])=1-e^{-\int^{t+s}_{t} \lambda(u)du}=1-e^{-(A(t+s)-A(t))}
\label{sep}
\end{eqnarray}
Let $t_{s}=t+s$, the probability density function of the time to the next event, which can be obtained by deriving the expression in (\ref{sep}) with respect to $t_{s}$
\begin{eqnarray}
\frac{d}{t_{s}}  P(\mbox{1 or more events occurred in} \,  (t,t_{s}])=\lambda(t_{s})e^{-(A(t_{s})-A(t))}
\label{sep1}
\end{eqnarray}
As we will see later, this expression (\ref{sep1}) is very useful in estimating of the intensity.
\subsubsection{Estimation of the intensity $\lambda(t)$}
Suppose we have data from a non-homogeneous Poisson process $N=(N(t))_{t\geq 0}$ and we are looking for the intensity function that caused it. The first step is to define the form of the intensity $\lambda(t)$; we limit ourselves to the case of parametric intensity. In the second step, given the probability density function defined in (\ref{sep1}) we can use the principle of Maximum Likelihood Estimate (MLE) to find the intensity parameter $\lambda$ maximizing the likelihood that a fraud will occur. The procedure is the following:\\

Suppose the $n$ events occur at $\tau_{1}<\tau_{2}< ...<\tau_{n}$ in the interval $(0,T]$. Since the $n$ events are independent and using (\ref{sep1}), the desired joint probability density takes the form
\begin{eqnarray*}
\lambda(\tau_{1})e^{-(A(\tau_{1})-A(0))}\cdot\lambda(\tau_{2})e^{-(A(\tau_{2})-A(\tau{1}))}\cdot...\cdot\lambda(\tau_{n})e^{-(A(\tau_{n})-A(\tau_{n-1}))}\\
\cdot P(N(T)-N(\tau_{n})=0)
\end{eqnarray*}
where \\

$P(N(T)-N(\tau_{n})=0)$ is the  probability of no event occurs in the interval $(\tau_{n},T]$. It is calculated as follows
\begin{eqnarray*}
P(N(T)-N(\tau_{n})=0)=e^{-(A(T)-A(\tau_{n}))}
\end{eqnarray*}
The likelihood of getting $\tau={\tau_{1},\tau_{2}, ...,\tau_{n}}$ is then
\begin{eqnarray*}
L(\lambda;\tau={\tau_{1},\tau_{2}, ...,\tau_{n}})=e^{-A(T)}\prod_{i=1}^{n}\lambda(\tau_{i})
\end{eqnarray*}
The Log-Likelihood is :
\begin{eqnarray}
l(\lambda;\tau={\tau_{1},\tau_{2}, ...,\tau_{n}})&=&-A(T)+\sum_{i=1}^{n}\mbox{log}(\lambda(\tau_{i}))\\
&=&-\int^{T}_{0}\lambda(s)ds +\sum_{i=1}^{n}\mbox{log}(\lambda(\tau_{i}))
\label{log}
\end{eqnarray}
The intensity estimate consists in finding the parameters of the intensity $\lambda(t)$ maximizing the Log-likelihood function defined in (\ref{log}). This estimated intensity is then used to predict the fraud event on the next transaction ($T+1$) based on the information available up to the time of the transaction $T$.

\subsection{Prediction of Fraud Event}

Consider the filtration $\mathcal{F}_{T}$ that contains the information about the fraud events up to time $T$. Suppose a new transaction is in progress at time $T_{\delta}$ ($T_{\delta}>T$) and we would like to know if this transaction is fraudulent or not. \\

\begin{proposition}
The probability that a fraud occurs at time $T_{\delta}$ is given by 
\begin{eqnarray}
P(\mbox{a fraud occurs at } \, T_{\delta})&=&1-e^{-(A(T_{\delta})-A(T))}
\label{fou}
\end{eqnarray}
where\\

$A(T)=\int^{T}_{0}\lambda(s)ds$
\end{proposition}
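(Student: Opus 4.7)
The plan is to reduce the probability in question to a tail probability of a Poisson increment and then invoke the distributional property of the NHPP increments that was recorded just before the statement. First I would carefully translate the informal event ``a fraud occurs at time $T_\delta$'' into a precise probabilistic statement. Since $T_\delta > T$ and only the filtration $\mathcal{F}_T$ is assumed available, the natural interpretation is that at least one jump of the counting process $N$ falls in the interval $(T, T_\delta]$, i.e. the event $\{N(T_\delta) - N(T) \geq 1\}$.

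Next I would invoke the NHPP property, established as a consequence of the NHPP definition earlier in the paper, that the increment $N(T_\delta) - N(T)$ is Poisson distributed with parameter $\int_T^{T_\delta} \lambda(u)\,du$. Writing $A(t) = \int_0^t \lambda(s)\,ds$, this parameter equals $A(T_\delta) - A(T)$, so in particular
\begin{equation*}
P\bigl(N(T_\delta) - N(T) = 0\bigr) = e^{-(A(T_\delta) - A(T))}.
\end{equation*}
Taking the complementary event then gives
\begin{equation*}
P\bigl(N(T_\delta) - N(T) \geq 1\bigr) = 1 - e^{-(A(T_\delta) - A(T))},
\end{equation*}
which is exactly the claimed expression. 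Equivalently one may simply apply equation (\ref{sep}) with $t = T$ and $t+s = T_\delta$ to get the same formula in a single step.

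The only real subtlety to address is why the probability is computed unconditionally rather than conditionally on $\mathcal{F}_T$. Because $\lambda$ is here a deterministic function of time, the Poisson increment $N(T_\delta) - N(T)$ is independent of $\mathcal{F}_T$, so the $\mathcal{F}_T$-conditional distribution coincides with the unconditional one. This independence of increments, which is part of the NHPP definition, is the main (and essentially only) fact that needs to be flagged in the argument; after that the proof reduces to the Poisson point-mass computation above.
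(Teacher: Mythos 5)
Your proposal is correct and follows essentially the same route as the paper: both identify the event with $\{N(T_{\delta})-N(T)\geq 1\}$, pass to the complement $\{N(T_{\delta})-N(T)=0\}$, and evaluate it via the NHPP increment property, i.e.\ equation (\ref{sep}). Your additional remark that independence of increments justifies replacing the $\mathcal{F}_{T}$-conditional probability by the unconditional one is a small but welcome clarification of a step the paper leaves implicit.
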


\begin{proof}
Following (\ref{sep})
\begin{eqnarray*}
P(\mbox{a fraud occurs at} \, T_{\delta})&=&1-P(\mbox{a fraud does not occur at} \,  T_{\delta})\\
&=&1-P(N(T_{\delta})-N(T)=0 \vert \mathcal{F}_{T})\\
&=&1-e^{-\int^{T_{\delta}}_{T} \lambda(u)du}\\
&=&1-e^{-(A(T_{\delta})-A(T))}
\end{eqnarray*}
\end{proof}
In the special case of homogeneous Poisson process, that is for constant $\lambda$
\begin{eqnarray}
P(\mbox{a fraud occurs at} \,  T_{\delta})&=&1-e^{-\lambda(T_{\delta}-T)}
\label{fouc}
\end{eqnarray}
We observe that in the case of homogeneous Poisson process, the probability of fraud is a function of parameter $\lambda$ and the elapsed time ($T_{\delta}-T$) between the two transactions. For the Inhomogeneous Poisson, it is actually a function of the difference between the compensator $A(T_{\delta})$ and $A(T)$.\\

Following (\ref{fou}): as $(T_{\delta}-T)\rightarrow \infty$, $(A(T_{\delta})-A(T))\rightarrow \infty$ and then the \\$P(\mbox{the fraud occurs at} \, T_{\delta})\rightarrow1$. On another side, as $(T_{\delta}-T)\rightarrow 0$, $(A(T_{\delta})-A(T))\rightarrow 0$ and then the
$P(\mbox{the fraud occurs at} \, T_{\delta})\rightarrow 0$.\\

Therefore, when the time between two transactions is large, it is very likely that the model generates a fraud alert. On the other hand, when two transactions are close, the model will not generate a fraud alert. This consequence could reduce the predictive power of the model when there is a succession of fraud events in record time.

\section{Application to Financial Dataset}

\subsection{Choice of deterministic intensity functions}
To apply the Poisson process to the dataset, the shape of the intensity function must be defined. Three classes of intensity functions are proposed. For each class of function $\lambda(t)$, we set the conditions for $\lambda(t)\geq 0$.
\begin{enumerate}
\item $\lambda(t)=\lambda$: this is the case of Homogeneous Poisson process and $\lambda$ must be greater than 0. $\lambda$ is estimated following \S \hspace{0.01mm}  \ref{lam}.
\item  $\lambda(t)=a+bt$: the intensity is assumed to be a linear function of time. To ensure $\lambda(t) \geq 0$ for $0 \leq t \leq T$, we impose the conditions  
\begin{eqnarray}
\left\{
\begin{array}{l}
  a\geq 0 \\
  b+\dfrac{a}{T} \geq 0
\end{array}
\right.
\label{cond}
\end{eqnarray}

\begin{proof}
We want $\lambda(t)=a+bt \geq 0$ for $0 \leq t \leq T$.\\

\underline{If $t=0$}:\\
$\lambda(t)=a,\,\,\,         \lambda(t)\geq 0 \Rightarrow a\geq 0$\\

\underline{If $0 < t \leq T$}:\\
$a+bt \geq 0 \Leftrightarrow b \geq -\dfrac{a}{t}$\\
We also know that $-\dfrac{a}{t} \leq -\dfrac{a}{T}$ since $a\geq 0$. In order to have $b \geq -\dfrac{a}{t}$, it is sufficient that $b \geq -\dfrac{a}{T} \Leftrightarrow b+\dfrac{a}{T} \geq 0$. So, the conditions are $a\geq 0$ and $b+\dfrac{a}{T} \geq 0$. 

\end{proof}
If $T \rightarrow \infty $, we obtain the trivial condition

\begin{eqnarray*}
\left\{
\begin{array}{l}
  a\geq 0 \\
  b\geq 0
\end{array}
\right.
\end{eqnarray*}

Therefore, when we consider a short period to estimate the intensity parameters, the feasible region of (\ref{cond}) expands to find the optimal solution. The figure \ref{fig:fig22} shows the feasible regions for different values of $T$. For sake of readability, $a\leqslant 10$ and $b \leqslant 100$. We observe that when $T$ becomes larger, the feasible region is reduced to the trivial region.

\item $\lambda(t)=a+bt+ct^{2}$: the intensity is a quadratic function as a function of time. To ensure $\lambda(t) \geq 0$ for $0 \leq t \leq T$, we impose the conditions  

\begin{eqnarray}
\left\{
\begin{array}{l}
  a\geq 0 \\
 c\geq 0 \\
 b+\dfrac{a}{T} \geq 0
\end{array}
\right.
\label{cond1}
\end{eqnarray}


The proof is similar to the above. Also, when when $T \rightarrow \infty $, (\ref{cond1}) is reduced to 
\begin{eqnarray*}
\left\{
\begin{array}{l}
  a\geq 0 \\
  b\geq 0\\
  c \geq 0
\end{array}
\right.
\end{eqnarray*}
\end{enumerate}

The conditions (\ref{cond}) and (\ref{cond1}) are the constraints of the optimization problem in (\ref{log}) for the Inhomogeneous Poisson process.

\renewcommand{\thesubfigure}{\roman{subfigure}}
\begin{figure}[H]
\centering
\begin{subfigure}[b]{1.0\textwidth}
\centering
\includegraphics[width=0.8\textwidth]{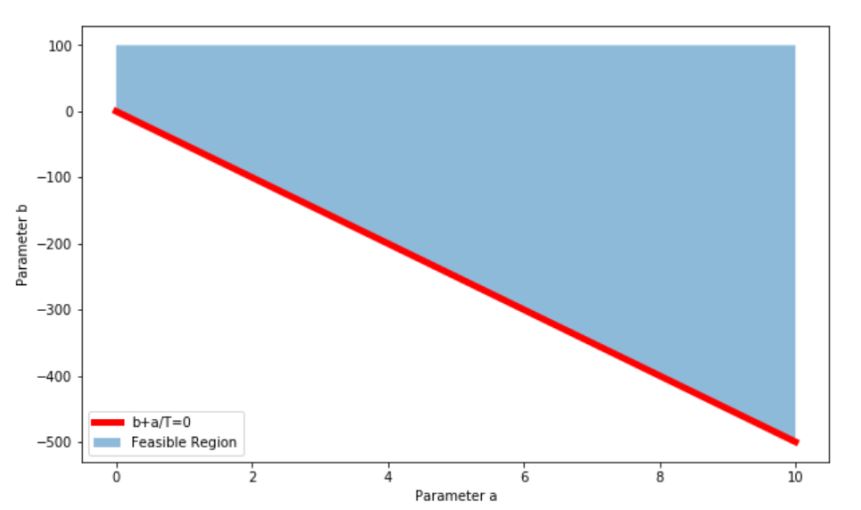}
\caption{}
\label{subfig-1:fig13} 
\end{subfigure}
\medskip
\begin{subfigure}[b]{1.0\textwidth}
\centering
\includegraphics[width=0.8\textwidth]{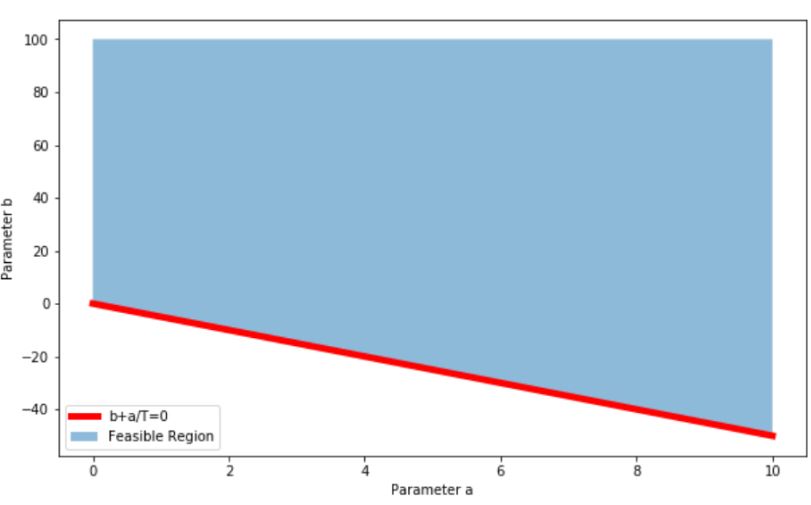}
\caption{}
\label{subfig-2:fig14}
\end{subfigure}
\label{fig:fig21}
\end{figure}

\begin{figure}[H]
\ContinuedFloat
\captionsetup{list=no}
\centering
\begin{subfigure}[b]{1.0\textwidth}
\centering
\includegraphics[width=0.8\textwidth]{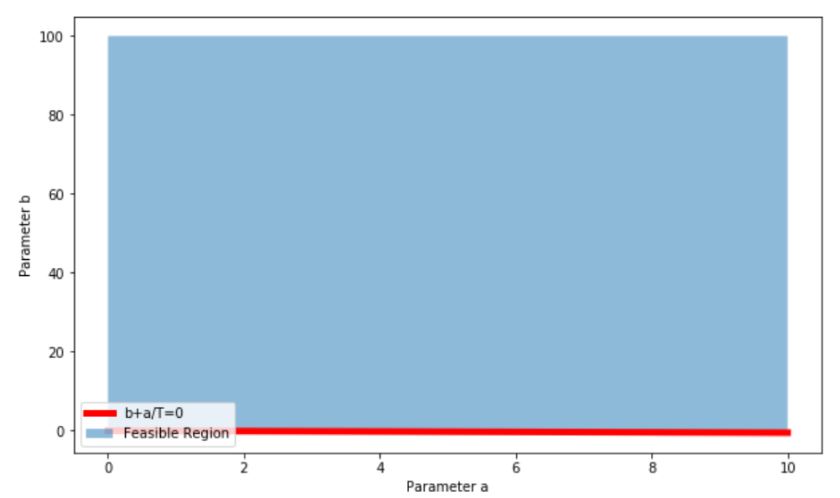}
\caption{}
\label{subfig-1:fig15}
\end{subfigure}

\caption{(i) Example of feasible region for $\lambda(t)=a+bt \geq 0$ when $T=0.02$: solution is in (\ref{cond}). The region is shown for $a\leqslant 10$ and $b \leqslant 100$. (ii) as for (i) but $T=0.2$. (iii) as for (i) but $T=20$.}
\label{fig:fig22}
\end{figure}

\subsection{Data}
The datasets provided by NetGuardians consist of two years of transactions for clients of a financial institution. It covers the period from 09-2015 to 09-2017 and includes a total of $18,139,078$ transactions made by $124,177$ clients. For confidentiality reasons, the name of the financial institution will not be mentioned. The dataset includes a total of $49$ features such as transaction dates, transactions amounts, transaction senders IDs, transaction recipients account numbers, banking countries, etc.. To be able to train a Poisson process algorithm, labelled data with examples of fraud are needed. All transactions in the dataset are labeled as fraudulent or not. Since the ground truth is not available, the labeling is based on the following simple pattern: transactions for which banks receiving money are outside Switzerland are considered fraudulent.
With the labelling method only $55,226$ clients have fraudulent transactions. To train the Poisson process, three features are required: client ID, timestamp and the label. Timestamps and labels are trained for each client to estimate the intensity of the fraud that will be used to predict  fraudulent event. \\

The proportion of fraud corresponding to the number of fraudulent transactions in relation to the total number of transactions is calculated for each client. According to the labeling method, some clients may have a 100\% fraud proportion. This concerns clients for whom the recipient institutions are all located outside Switzerland. To be realistic, we remove these clients from our analysis. In addition, clients that do not contain any fraud events in the complete dataset are deleted because the hours of fraud events are unknown and their intensity can not be estimated. In addition, these datasets contain only one class and, in this context, no measure of classification performance such as ROC-AUC is defined.\\

The figure \ref{fig:fig23}  shows the distribution and the Boxplot of fraud proportions.  We notice that the cleaned dataset is generally unbalanced because most clients have a low proportion of frauds. The Boxplot shows a skewed right data with the presence of larger outliers. With the value of the median, $50\%$ of the clients have a fraud proportion less than $9\%$. 

\renewcommand{\thesubfigure}{\roman{subfigure}}
\begin{figure}[H]
\centering
\begin{subfigure}[b]{1.0\textwidth}
\centering
\includegraphics[width=0.9\textwidth]{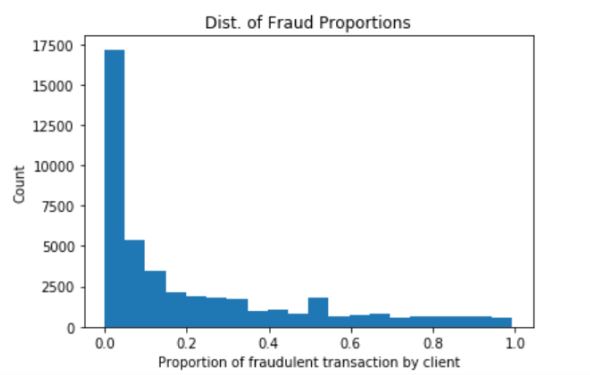}
\caption{}
\label{subfig-1:fig1} 
\end{subfigure}
\medskip
\begin{subfigure}[b]{1.0\textwidth}
\centering
\includegraphics[width=0.8\textwidth]{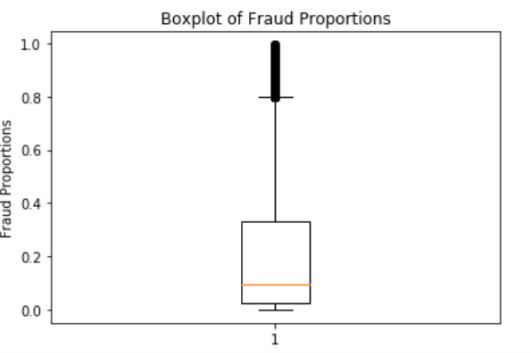}
\caption{}
\label{subfig-2:fig100}
\end{subfigure}
\caption{(i) Histogram of Fraud proportions in the full dataset. (ii) Boxplot of Fraud proportions in the full dataset. The clients with no fraud events and the clients with $100\%$ of fraud proportion are removed from this full dataset}
\label{fig:fig23}
\end{figure}

However, it is important to mention that the labelling method is relatively simple and that the above histogram is not representative of the true distribution of fraud because, in practice, the majority of fraud proportions are less than $1\%$. To study our analysis in an imbalanced dataset framework,
we propose to focus on the clients with less than $20\%$ frauds. Next, we divide this dataset into four subsets containing different fraud profiles. The first subset includes clients fraud rate less than $1\%$, the second subset concerns clients with a proportion between $1\%$ and $5\%$, the third subset is for clients whose fraud proportion is between $5\%$ and $10\%$ and the last one for clients whose fraud proportion is between $10\%$ and $20\%$. Figure \ref{fig2} shows the Boxplot for each group. The four datasets are roughly symmetric with no outliers. Obviously, the greater variability in the group 4 and the  smaller variability in the group 1 are well observed. 

\begin{figure}[H]
\begin{center}
\includegraphics[width=4.5in]{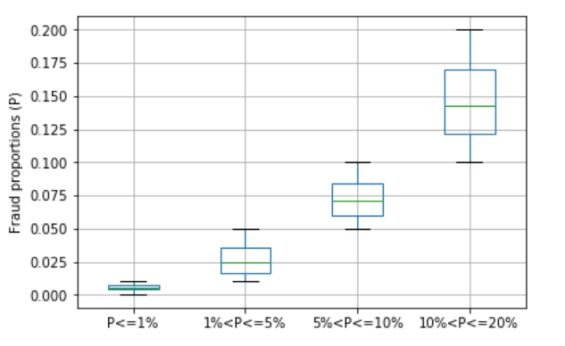}
\end{center}
\caption{Boxplots for the four subsets. Clients with the fraud proportion $P\leq 20\%$ are grouped in four subsets containing different fraud profiles. }
\label{fig2}
\end{figure}

In each subset, we randomly select  $500$ clients and we train and test the Poisson models on the transactions for each client. The training set represents the first $80\%$ of transactions for which intensity parameters are estimated. The test set represents the last $20\%$ and the fraud events are predicted with the estimated parameters. In addition, to take into account the time-varying intensity parameters, the prediction in the test set is also performed by rolling windows.\\

From a practical point of view, when there is no fraud in the training set, it is difficult to estimate the fraud intensity because the fraud event times are not available; see equations (\ref{int0}) and (\ref{log}). Two solutions are possible:

\begin{enumerate}
\item Remove the clients for whom there was no fraud occurrence in the training set; The consequence is that we could lose more information. 
\item Make the assumption that the intensity i.e. the occurrence rate of fraud $\lambda=0$ as there is no fraud events in the training set. In this context the fraud prediction probability is zero; see Proposition $1$.  
\end{enumerate}

We conduct our analysis with the last one that is the intensity $\lambda=0$ when we train a dataset with no fraud information. The main reason is that we expect to keep most of client profiles in our analysis. As we will see later, under this assumption the dynamic models perform worse than the static models.  
To compare the various Poisson models, we define a baseline model (benchmark) based on a naive approach. The naive approach is to calculate the proportion of fraud in the training set and use that probability to predict fraud in the test set.  Finally, predictive performance is summarized in each subset using two performance measures: ROC-AUC and Average Precision (AP) Score.  \\
\subsection{Results}
By adding the rolling windows approach to our study, we have a total of 6 models to compare. Let start by giving more explanations to the $6$ models: 
\begin{enumerate}
\item The first model is the homogeneous Poisson process ($\lambda(t)=\lambda$). The constant intensity $\lambda$ is estimated in the training set. By (\ref{fouc}), the estimated $\lambda$ is used for predicting the fraud event in the whole test set.   We note this model by \textbf{HomoStatic}.
\item The second model is the Homogeneous Poisson process unless the prediction is done by rolling windows. The window starts by the training set and it is used for the estimation of  the intensity; this estimated intensity is used to predict the fraud event on the next transaction in the test set. Then, the sliding window is shifted one step ahead on the next transaction. The intensity is estimated again in the second time window and it is used for the prediction of fraud on the next transaction. This procedure is repeated until the end of the test set. The goal of this methodology is to take account the time varying of the intensity. The model is denoted by \textbf{HomoDynamic}.
\item The third model is the non-homogeneous Poisson process with the intensity is a linear function of time ($\lambda(t)=a+bt$). Intensity paramters are estimated in the training set and are used for the fraud prediction in the whole test set. It is denoted \textbf{LinearStatic}.
\item The fourth model is the inhomogeneous linear intensity function unless the prediction is performed by rolling windows. The rolling windows procedure is the same as above. It is denoted \textbf{LinearDynamic}
\item The fifth model is the non-homogeneous Poisson process with the intensity being a quadratic function of time ($\lambda(t)=a+bt+ct^{2}$). The procedure is the same as in \textbf{LinearStatic}. We denote this model \textbf{QuadraticStatic} 
\item The last model is as \textbf{QuadraticStatic} unless we make a prediction by rolling windows. It is denoted by \textbf{QuadraticDynamic}
\end{enumerate}
In addition, we note by \textbf{NaiveStatic} the baseline model to estimate the probability of fraud in the training set and using the same probability for the prediction in the test set. The probabilities of prediction are therefore the same for all the transactions of the test set. This is equivalent to a random classifier because the model has no discrimination capability to distinguish genuine transactions from fraudulent transaction. \\

We are interested in the power of prediction of the different models. Thus, all the results presented below are based on the predicting probabilities and the labels in the test set. 
The tables \ref{table:1} to \ref{table:4} show the AUC (Area Under The curve)-ROC (Receiver Operating Characteristics) curves for the different models in each group. AUC-ROC is the measure of performance for the classification problem at various thresholds settings. ROC is a probability curve and AUC represents the degree or measure of separability.  It tells how much the model is able to distinguish between classes. Higher the AUC, better the model is. By analogy, higher the AUC, better the model is at distinguishing between genuine and fraudulent transactions. The tables show the mean, the standard deviation, the minimum and maximum for the AUCs calculated for 500 clients in each group.\\

We note that dynamic models (with rolling windows) are more volatile than static models (without rolling windows). All static models perform significantly better than the dynamic models. The LinearStatic model is the best one and has a mean AUC of $69\%$, $73\%$, $72\%$, $71\%$ in the group 1, group 2, group 3 and group 4 respectively. It is followed by the QuadraticStatic model. The baseline model (naive approach) is significantly worse than Poisson models with the exception of the QuadraticDynamic model in the group 1 where the mean AUC is $47\%$. However, the HomoDynamic model performs better than the other dynamic models. 
It is important to mention that in some cases Poisson do not predict frauds correctly, as AUCs are equal to $0$. It is often the case when  
the fraud information used in the training set to estimate the intensity is not sufficient for the prediction in the test set. Let us illustrate one common situation in our dataset where there is no fraud in the training set that conducts to AUC=0. Consider an example of dataset with $6$ training instances and $3$ test instances. The labels are:\\

\textbf{Training set:} $[0\,\,\, 0\,\,\, 0\,\,\, 0\,\,\, 0 \,\,\,0]$ \hspace{5mm} \textbf{Test set:} $[1 \,\,\,0\,\,\, 0]$\\

The labels $0$ indicate genuine transactions and labels $1$ indicate fraudulent transactions. There is no fraud events in the training set and from the above assumption $\lambda=0$. For all static models, the prediction probabilities in the test set are $0$ and therefore the AUC-ROC is equal to $0.5$. On the other hand, the dynamic models based on the sliding windows show an AUC-ROC equal to $0$. In fact, it is easy to show that using the sliding windows in the test set, the first predicting probability is $0$ and the next two ones are different to $0$. This conducts to an AUC-ROC equal $0$. \\

\begin{table}[h!]
\centering
\begin{tabular}{||c c c c c||} 
 \hline
 Models & Max & Mean & Min & Standard Deviation\\ [0.5ex] 
 \hline\hline
 HomoDynamic & 1 & 0.503217 & 0 & 0.341836\\ 
 HomoStatic & 1 & 0.674555 & 0.434599 & 0.227029\\
 LinearDynamic & 1 & 0.499829 & 0 & 0.339872\\
 LinearStatic & 1 & 0.684265 & 0.434599 & 0.235382\\
 QuadraticDynamic & 1 & 0.471564 & 0 & 0.313377\\ 
 QuadraticStatic & 1 & 0.676788 & 0.434599 & 0.231372\\   
 NaiveStatic & 0.5 & 0.50 & 0.50 & 0\\ [1ex] 
 \hline
\end{tabular}
\caption{AUC: Summary for statistics in the group 1 ($P\leq1\%$)}
\label{table:1}
\end{table}

\begin{table}[h!]
\centering
\begin{tabular}{||c c c c c||} 
 \hline
 Models & Max & Mean & Min & Standard Deviation\\ [0.5ex] 
 \hline\hline
 HomoDynamic & 1 & 0.658384 & 0 & 0.297473\\ 
 HomoStatic & 1 & 0.716914 & 0.048193 & 0.246125\\
 LinearDynamic & 1 & 0.639246 & 0 & 0.295391\\
 LinearStatic & 1 & 0.732957 & 0.048193 & 0.246664\\
 QuadraticDynamic & 1 & 0.612797 & 0 & 0.287221\\ 
 QuadraticStatic & 1 & 0.727479 & 0.048193 & 0.243740\\   
 NaiveStatic & 0.5 & 0.50 & 0.50 & 0\\ [1ex] 
 \hline
\end{tabular}
\caption{AUC: Summary for statistics in the group 2 ($1\%<P\leq5\%$)}
\label{table:2}
\end{table}

\begin{table}[h!]
\centering
\begin{tabular}{||c c c c c||} 
 \hline
 Models & Max & Mean & Min & Standard Deviation\\ [0.5ex] 
 \hline\hline
 HomoDynamic & 1 & 0.682609 & 0 & 0.273724\\ 
 HomoStatic & 1 & 0.709963 & 0 & 0.251212\\
 LinearDynamic & 1 & 0.670611 & 0 & 0.268545\\
 LinearStatic & 1 & 0.717206 & 0 & 0.244682\\
 QuadraticDynamic & 1 & 0.649789 & 0 & 0.262771\\ 
 QuadraticStatic & 1 & 0.714477 & 0 & 0.243915\\   
 NaiveStatic & 0.5 & 0.50 & 0.50 & 0\\ [1ex] 
 \hline
\end{tabular}
\caption{AUC: Summary for statistics in the group 3 ($5\%<P\leq10\%$)}
\label{table:3}
\end{table}

\begin{table}[h!]
\centering
\begin{tabular}{||c c c c c||} 
 \hline
 Models & Max & Mean & Min & Standard Deviation\\ [0.5ex] 
 \hline\hline
 HomoDynamic & 1 & 0.675263 & 0 & 0.254579\\ 
 HomoStatic & 1 & 0.695023 & 0 & 0.246416\\
 LinearDynamic & 1 & 0.655216 & 0 & 0.262901\\
 LinearStatic & 1 & 0.709309 & 0 & 0.241264\\
 QuadraticDynamic & 1 & 0.650595 & 0 & 0.261025\\ 
 QuadraticStatic & 1 & 0.708917 & 0 & 0.240598\\   
 NaiveStatic & 0.5 & 0.50 & 0.50 & 0\\ [1ex] 
 \hline
\end{tabular}
\caption{AUC: Summary for statistics in the group 4 ($10\%<P\leq20\%$)}
\label{table:4}
\end{table}

AUC-ROC can be a misleading measure for classification in imbalanced fraud dataset. One of the main reason is that it underestimates the false positive rate. In fact, since the number of legitimate transactions (negative examples) far exceeds the number of fraudulent transactions (positive examples), a significant variation in the number of false positives can lead to a slight change in the false positive rate. This can lead to erroneous conclusions.  
In this case the precision-recall analysis is more appropriate because these metrics do not take into account the number of legitimate transactions (negative examples) in their calculation. We focus on the Average Precision (AP) which is an estimate of the area under the precision-recall curve and their results are shown in the following tables \ref{table:5} to \ref{table:8}. All the Poisson models significantly outperform the naive approach and static approaches perform better than the dynamic approaches. LinearStatic model still remains the better one for all groups, following by the QuadraticStatic model. Also, the HomoDynamic model performs better than the other dynamic models. In conclusion, the AUC-ROC and AP analyses  showed that in all four groups the linearStatic model is the best; it is followed by the QuadraticStatic model and then by the HomoDynamic model. All the Poisson models outperform significantly the baseline approach.\\

We are also interested in the relative performance in term of prediction between the Poisson models and the baseline approach. The idea is to determine in which group the Poisson models perform best. AP scores are used for this analysis. The relative variations between the Mean Average-Precision (MAP) for the different Poisson models and the baseline model are calculated in the table \ref{table:9}. The table shows that the relative variation decreases when the fraud proportion of the group increases. So, the predicting power of the Poisson models increases with the degree of imbalanced dataset. Figure \ref{fig12} shows the relative performance for the different models in each group. We observe that the relative performance is better in the group $1$ and that the linearStatic model outperforms the other $5$ models.\\

During the analysis, we observe that dynamic approaches (Rolling Windows) are less efficient than the static approaches regardless the performance measures. That is, taking account the temporal variation of the intensity parameters by the rolling windows does not produce better results. Two mains reasons could explain this weak performance of dynamic models. First, as illustrated above, the assumption of $\lambda=0$ when we train a dataset with no fraud may conduct to this weak performance. Second, the window size is essential for the the forecast accuracy. In fact following \cite{Roll}, different window sizes may lead to different empirical results in practice and good results might be obtained simply by chance. To produce better results, one can vary the window size and select the optimal window size for better prediction. Another possibility is to consider a stochastic intensity model that incorporates the time varying of the parameters. This has to be conducted in a next research.\\

\begin{table}[h!]
\centering
\begin{tabular}{||c c c c c||} 
 \hline
 Models & Max & Mean & Min & Standard Deviation\\ [0.5ex] 
 \hline\hline
 HomoDynamic & 1 & 0.274416 & 0.004132 & 0.400234\\ 
 HomoStatic & 1 & 0.332728 & 0.004132 & 0.430593\\
 LinearDynamic & 1 & 0.202763 & 0.004132 & 0.325295\\
 LinearStatic & 1 & 0.390334 & 0.004132 & 0.456534\\
 QuadraticDynamic & 1 & 0.135732 & 0.004132 & 0.263333\\ 
 QuadraticStatic & 1 & 0.354866 & 0.004132 & 0.439122\\   
 NaiveStatic & 0.05 & 0.022027 & 0.001511 & 0.011081\\ [1ex] 
 \hline
\end{tabular}
\caption{AP: Summary for statistics in the group 1 ($P\leq1\%$)}
\label{table:5}
\end{table}
\hspace{2em}

\begin{table}[h!]
\centering
\begin{tabular}{||c c c c c||} 
 \hline
 Models & Max & Mean & Min & Standard Deviation\\ [0.5ex] 
 \hline\hline
 HomoDynamic & 1 & 0.451901 & 0.005307 & 0.354692\\ 
 HomoStatic & 1 & 0.511768 & 0.005307 & 0.369290\\
 LinearDynamic & 1 & 0.380657 & 0.005216 & 0.331117\\
 LinearStatic & 1 & 0.566920 & 0.005358 & 0.361079\\
 QuadraticDynamic & 1 & 0.279384 & 0.005320 & 0.266282\\ 
 QuadraticStatic & 1 & 0.548226 & 0.005358 & 0.362918\\   
 NaiveStatic & 0.25 & 0.062533 & 0.007576 & 0.047267\\ [1ex] 
 \hline
\end{tabular}
\caption{AP: Summary for statistics in the group 2 ($1\%<P\leq5\%$)}
\label{table:6}
\end{table}
\hspace{2em}

\begin{table}[h!]
\centering
\begin{tabular}{||c c c c c||} 
 \hline
 Models & Max & Mean & Min & Standard Deviation\\ [0.5ex] 
 \hline\hline
 HomoDynamic & 1 & 0.540694 & 0.029412 & 0.318448\\ 
 HomoStatic & 1 & 0.578589 & 0.029412 & 0.312256\\
 LinearDynamic & 1 & 0.497483 & 0.029412 & 0.302288\\
 LinearStatic & 1 & 0.598028 & 0.029412 & 0.303077\\
 QuadraticDynamic & 1 & 0.440209 & 0.029412 & 0.288972\\ 
 QuadraticStatic & 1 & 0.589859 & 0.029412 & 0.299586\\   
 NaiveStatic & 0.5 & 0.141468 & 0.010638 & 0.113300\\ [1ex] 
 \hline
\end{tabular}
\caption{AP: Summary for statistics in the group 3 ($5\%<P\leq10\%$)}
\label{table:7}
\end{table}
\hspace{2em}

\begin{table}[h!]
\centering
\begin{tabular}{||c c c c c||} 
 \hline
 Models & Max & Mean & Min & Standard Deviation\\ [0.5ex] 
 \hline\hline
 HomoDynamic & 1 & 0.560771 & 0.040000 & 0.286504\\ 
 HomoStatic & 1 & 0.599116 & 0.040000 & 0.283211\\
 LinearDynamic & 1 & 0.540127 & 0.040000 & 0.286291\\
 LinearStatic & 1 & 0.623278 & 0.040000 & 0.271896\\
 QuadraticDynamic & 1 & 0.517028 & 0.040000 & 0.286320\\ 
 QuadraticStatic & 1 & 0.622054 & 0.040000 & 0.270560\\   
 NaiveStatic & 0.8 & 0.203014 & 0.018519 & 0.123488\\ [1ex] 
 \hline
\end{tabular}
\caption{AP: Summary for statistics in the group 4 ($10\%<P\leq20\%$)}
\label{table:8}
\end{table}
\hspace{2em}

\begin{table}[h!]
\centering
\begin{tabular}{||c c c c c||} 
 \hline
 Models & $P \leq 1\%$ & $1\%<P\leq 5\%$ & $5\%<P\leq 10\%$ & $10\%<P\leq 20\%$\\ [0.5ex] 
 \hline\hline
 HomoDynamic & 11.458411 & 6.226555 & 2.822037 & 1.762236\\ 
 HomoStatic & 14.105765 & 7.183912 & 3.089905 & 1.951114\\
 LinearDynamic & 8.205382 & 5.087254 & 2.516586 & 1.660544\\
 LinearStatic & 16.721054 & 8.065881 & 3.227313 & 2.070128\\
 QuadraticDynamic & 5.162206 & 3.467756 & 2.111730 & 1.546766\\ 
 QuadraticStatic & 15.110806 & 7.766931 & 3.169568 & 2.064102\\ [1ex]  
  \hline
\end{tabular}
\caption{Relative Variations of MAP between the Poisson Models and the Baseline model in the four groups}
\label{table:9}
\end{table}

\begin{figure}[H]
\begin{center}
\includegraphics[width=6in]{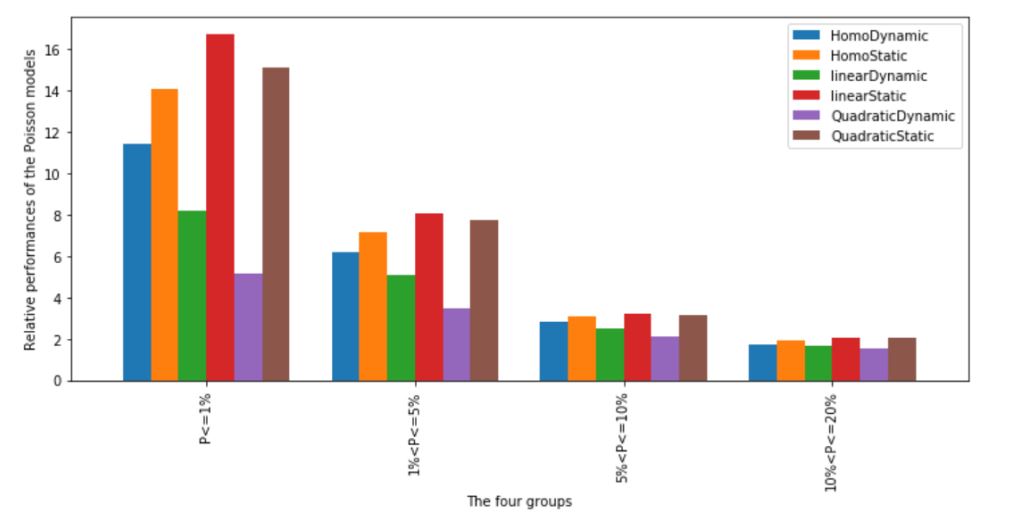}
\end{center}
\caption{Relative performances between the different models and the baseline approach.  These performances are plotted in each group showing in which group  the Poisson models perform the best }
\label{fig12}
\end{figure}

\section{Conclusion}
The Poisson process is applied to detect fraud in an imbalanced dataset. The case of homogeneous and non-homogeneous Poisson processes is investigated. For non-homogeneous Poisson process, the linear and quadratic functions are considered. We have shown how to estimate the intensity and to predict fraud events. Our methodology is applied to financial datasets.\\

For each Poisson model studied, we consider the static and the dynamic approach. Unlike the static approach, the dynamic one takes into account the temporal variation of intensity parameters and works with rolling windows. All models are compared to a baseline model of fraud prediction using the proportion of frauds obtained in the training set. We found that all Poisson models outperform the baseline and that static approaches perform better than the dynamic ones. The static linear model remains the better for all groups followed by the static quadratic model and then by the homogeneous Poisson model. The study also showed a better predicting power of the Poisson models in the case of more imbalanced dataset.\\

One of the main problems of this study is the training of the Poisson process in a set with no fraud events. In this context, it is difficult to estimate the intensity parameters because we have no fraud event times. In this study, it is assumed that the  intensity is zero. But as indicated above this assumption could conduct to a poorer performance of the model. \\

Another problem is the dynamic of the intensity function. It is assumed here that the fraud rate is constant or deterministic i.e. function of time. In fact, fraud is a rare event that can happen at any times; so it must be stochastic, a random variable at any time. These issues will be addressed in future research by detecting fraud using a stochastic intensity model combined with deep learning algorithms.
\newpage
\bibliographystyle{apalike}
\bibliography{toto}

\begin{thebibliography}{}

\bibitem[Artzner and Delbaen, 1995]{AD}
Artzner, P. and Delbaen, F. (1995).
\newblock Default risk insurance and incomplete markets.
\newblock {\em Mathematical Finance}, 5:187--195.

\bibitem[Cox and Lewis, 1966]{CoxLewis}
Cox, D.~R. and Lewis, P.~A. (1966).
\newblock Statistical analysis of series of events. methuen, uk.
\newblock {\em http://dx.doi.org/10.1007/978-94-011-7801-3}.

\bibitem[Drazek, 2013]{Draz}
Drazek, L.~C. (2013).
\newblock Intensity estimation for poisson processes.
\newblock {\em Master Thesis, The University of Leeds}.

\bibitem[Duffie and Singleton, 1999]{DS}
Duffie, D. and Singleton, K. (1999).
\newblock Modelling term structures of defaultable bonds.
\newblock {\em Review of Financial Studies}, 12(4):687--720.

\bibitem[Dyzma, 2018]{Dyz}
Dyzma, M. (2018).
\newblock Fraud detection with machine learning: How banks and financial
  institutions leverage ai.
\newblock {\em
  https://www.netguru.com/blog/fraud-detection-with-machine-learning-how-banks-and-financial-institutions-leverage-ai}.

\bibitem[Elrahman and Abraham, 2013]{Abra}
Elrahman, S. M.~A. and Abraham, A. (2013).
\newblock A review of class imbalance problem.
\newblock {\em J. Netw. Innov. Comput.}, 1:332--340.

\bibitem[Fleming and Harrington, 2005]{Thoma}
Fleming, T.~R. and Harrington, D.~P. (2005).
\newblock {\em Counting Processes and Survival Analysis.}
\newblock Inc 1st ed. JohnWiley and Sons.

\bibitem[Galar et~al., 2012]{Galar}
Galar, M., Fernandez, A., Barrenechea, E., Bustince, H., and Herrera, F.
  (2012).
\newblock A review on ensembles for the class imbalance problem:
  bagging-,boosting-,and hybrid-based approaches.
\newblock {\em Systems, Man, and Cybernetics, Part C: Applications and Reviews,
  IEEE Transactions on}, 42(4):463--484.

\bibitem[He and Garcia, 2009]{Garcia}
He, H. and Garcia, E.~A. (2009).
\newblock Learning from imbalanced data.
\newblock {\em IEEE Transactions on Knowledge and Data Engineering},
  21:1263--1284.

\bibitem[Inoue et~al., 2017]{Roll}
Inoue, A., Jin, L., and Rossi, B. (2017).
\newblock Rolling window selection for out-of-sample forecasting with
  time-varying parameters.
\newblock {\em Journal of Economoetrics}, 196(1):55--67.

\bibitem[Jarrow and Turnbull, 1995]{JT}
Jarrow, R. and Turnbull, S. (1995).
\newblock Pricing derivatives on financial securities subject to credit risk.
\newblock {\em Journal of Finance}, 50(1):53--86.

\bibitem[Kingman, 1993]{KI}
Kingman, J. F.~C. (1993).
\newblock {\em Poisson Process.}
\newblock Oxford University Press.

\bibitem[Krawczyk, 2016]{Kraw}
Krawczyk, B. (2016).
\newblock Learning from imbalanced data.
\newblock {\em Prog Artif Intell}, pages 1--12.

\bibitem[Reiss, 1993]{Reiss}
Reiss, R.-D. (1993).
\newblock {\em A Course on Point Processes.}
\newblock 1st ed. Springer Series in Statistics.

\bibitem[Ross, 2010]{RO}
Ross, S.~M. (2010).
\newblock {\em Introduction to probability models.}
\newblock 10th ed. Academic press, Boston, USA.

\end{thebibliography}

\end{document}